\newcommand\version{October 1, 2013}
\newtheorem{theorem}{Theorem}
\newtheorem{proposition}[theorem]{Proposition}
\newtheorem{lemma}[theorem]{Lemma}
\theoremstyle{definition}
\theoremstyle{remark}
\newtheorem{remark}[theorem]{Remark}
\newcommand{\C}{\mathbb{C}}
\renewcommand{\epsilon}{\varepsilon}
\renewcommand{\phi}{\varphi}
\DeclareMathOperator{\tr}{Tr}
\begin{document}

\title[Monotonicity of a relative R\'enyi entropy --- \version]{Monotonicity of a relative R\'enyi entropy}

\author{Rupert L. Frank}
\address{Rupert L. Frank, Mathematics 253-37, Caltech, Pasadena, CA 91125, USA}
\email{rlfrank@caltech.edu}

\author{Elliott H. Lieb}
\address{Elliott H. Lieb, Departments of Mathematics and Physics, Princeton University, Washington Road, Princeton, NJ 08544, USA}
\email{lieb@princeton.edu}

\thanks{\copyright\, 2013 by the authors. This paper may be
reproduced, in its entirety, for non-commercial purposes.\\
U.S.~National Science Foundation grants PHY‐1347399 (R.F.), PHY-0965859 and PHY-1265118 (E.L.) and the Simons Foundation grant 230207 (E.L.) are acknowledged.}

\thanks{\version}

\begin{abstract}
We show that a recent definition of relative R\'enyi entropy is monotone under completely positive, trace preserving maps. This proves a recent conjecture of M\"uller--Lennert et al.
\end{abstract}

\maketitle


Recently, M\"uller--Lennert et al. \cite{MuLe} and Wilde et al. \cite{Wi} modified the traditional notion of relative R\'enyi entropy and showed that their new definition has several desirable properties of a relative entropy. One of the fundamental properties of a relative entropy, namely monotonicity under completely positive, trace preserving maps (quantum operations) was shown only in a limited range of parameters and conjectured for a larger range. Our goal here is to prove this conjecture.

More precisely, the definition of the \emph{quantum R\'enyi divergence} \cite{MuLe} or \emph{sandwiched R\'enyi entropy} \cite{Wi} is
$$
D_\alpha(\rho\|\sigma)=
\begin{cases}
(\alpha-1)^{-1} \log\left( \left(\tr\rho\right)^{-1} \tr\left(\sigma^{(1-\alpha)/(2\alpha)} \rho \sigma^{(1-\alpha)/(2\alpha)} \right)^\alpha \right)  & \text{if}\ \alpha\in(0,1)\cup(1,\infty) \,, \\
\left(\tr\rho\right)^{-1} \tr\rho\left(\log\rho - \log\sigma\right) & \text{if}\ \alpha=1 \,, \\
\log \left\|\sigma^{-1/2}\rho\sigma^{-1/2}\right\|_\infty & \text{if}\ \alpha=\infty
\end{cases}
$$
for non-negative operators $\rho,\sigma$. Here, for $\alpha\geq 1$, we define $\tr\left( \sigma^{(1-\alpha)/\alpha} \rho \sigma^{(1-\alpha)/\alpha} \right)^\alpha=\infty$ if the kernel of $\sigma$ is not contained in the kernel of $\rho$. The factor $(\tr\rho)^{-1}$ is inessential and could be dropped, but we keep it in order to be consistent with \cite{MuLe}. After a first version of our paper appeared (arXiv:1306.5358) we were made aware of the fact that $D_\alpha(\rho\|\sigma)$ is a special case of a two-parameter family of relative entropies introduced earlier in \cite{JOPP}.

Note that $D_\alpha(\rho\|\sigma)$ is the relative von Neumann entropy for $\alpha=1$, the relative max-entropy for $\alpha=\infty$ and closely related to the fidelity $\tr\left(\sigma^{1/2}\rho\sigma^{1/2}\right)^{1/2}$ for $\alpha=1/2$. In \cite{MuLe} it is shown that $D_\alpha(\rho\|\sigma)$ depends continuously on $\alpha$, in particular, at $\alpha=1$ and $\alpha=\infty$.

The definition of $D_\alpha(\rho\|\sigma)$ should be compared with the traditional relative R\'enyi entropy (see e.g. \cite{MoHi}),
$$
D'_\alpha(\rho\|\sigma) = (\alpha-1)^{-1} \log\left( \left(\tr\rho\right)^{-1} \tr \sigma^{1-\alpha} \rho^\alpha \right)  \quad \text{if}\ \alpha\in(0,1)\cup(1,\infty) \,.
$$
Note that by the Lieb--Thirring trace inequality \cite{LiTh}
$$
D_\alpha(\rho\|\sigma) \leq D'_\alpha(\rho\|\sigma)
\qquad
\text{for}\ \alpha >1 \,.
$$

Our main results in this paper are the following two theorems.

\begin{theorem}[Monotonicity]\label{main}
Let $1/2\leq\alpha\leq\infty$ and let $\rho,\sigma\geq 0$. Then for any completely 
positive, trace preserving map $\mathcal E$,
$$
D_\alpha(\rho\|\sigma) \geq D_\alpha(\mathcal E(\rho)\|\mathcal E(\sigma)) \,.
$$
\end{theorem}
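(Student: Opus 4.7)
My approach is the classical two-step: reduce to a partial trace via Stinespring, then exploit either joint concavity (for small $\alpha$) or a dual variational principle (for large $\alpha$). Set $Q_\alpha(\rho,\sigma):=\tr(\sigma^{(1-\alpha)/(2\alpha)}\rho\sigma^{(1-\alpha)/(2\alpha)})^\alpha$, so monotonicity of $D_\alpha$ under $\mathcal E$ is equivalent to $Q_\alpha(\mathcal E(\rho),\mathcal E(\sigma))\leq Q_\alpha(\rho,\sigma)$ when $\alpha>1$ and to the reverse inequality when $1/2\leq\alpha<1$. By Stinespring one can write $\mathcal E(X)=\tr_E(VXV^*)$ with $V^*V=I$; since $Q_\alpha$ is invariant under the isometric embedding $X\mapsto VXV^*$, it suffices to prove the inequality when $\mathcal E$ is a partial trace. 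The endpoint values $\alpha\in\{1,\infty\}$ will then drop out from the continuity in $\alpha$ already established in \cite{MuLe}.

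\textbf{The range $1/2\leq\alpha\leq 1$.} In this range I would aim for joint concavity of $(\rho,\sigma)\mapsto Q_\alpha(\rho,\sigma)$. Cyclicity of the trace gives $Q_\alpha(\rho,\sigma)=\tr(\rho^{1/2}\sigma^{(1-\alpha)/\alpha}\rho^{1/2})^\alpha$, with exponent $(1-\alpha)/\alpha\in[0,1]$ and $t\mapsto t^\alpha$ operator concave on $[0,\infty)$. Joint concavity should then follow by combining Lieb's joint-concavity theorem for $\tr(K^*A^pKB^{1-p})$-type expressions with Epstein's concavity of $X\mapsto \tr f(B^*XB)$ for operator-concave $f$. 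Joint concavity together with unitary invariance of $Q_\alpha$ immediately gives monotonicity under partial trace through the standard twirling: $\tr_E X$ is a convex combination of unitary conjugates $(I\otimes U)X(I\otimes U)^*$ averaged over the unitary group of $\mathcal H_E$.

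\textbf{The range $\alpha>1$.} Here $Q_\alpha$ is no longer jointly concave, so I would instead start from Schatten--H\"older duality,
\[
Q_\alpha(\rho,\sigma)^{1/\alpha}
=\sup_{\omega\geq 0,\ \|\omega\|_{\alpha/(\alpha-1)}\leq 1}\tr\bigl(\rho\,\sigma^{(1-\alpha)/(2\alpha)}\omega\,\sigma^{(1-\alpha)/(2\alpha)}\bigr).
\]
Applying this to $Q_\alpha(\mathcal E(\rho),\mathcal E(\sigma))^{1/\alpha}$ and pushing $\mathcal E^*$ onto the $\omega$-side, monotonicity reduces to the single operator-Schatten inequality
\[
\bigl\|\sigma^{(\alpha-1)/(2\alpha)}\,\mathcal E^*\bigl(\mathcal E(\sigma)^{(1-\alpha)/(2\alpha)}\,\omega\,\mathcal E(\sigma)^{(1-\alpha)/(2\alpha)}\bigr)\,\sigma^{(\alpha-1)/(2\alpha)}\bigr\|_{\alpha/(\alpha-1)}
\leq\|\omega\|_{\alpha/(\alpha-1)}.
\]
A natural plan is to derive this from operator convexity of the ``perspective'' map $K\mapsto \sigma^{q}\mathcal E^*(\mathcal E(\sigma)^{-q}K\mathcal E(\sigma)^{-q})\sigma^{q}$ with $q=(\alpha-1)/(2\alpha)\in(0,1/2)$, in the spirit of Ando's joint-convexity theorem, combined with an Araki--Lieb--Thirring-type bound. \textbf{I expect this last inequality to be the main obstacle}: it is exactly what prevents the Wilde et al.\ interpolation proof from being extended past $\alpha=2$, so the full weight of the theorem sits in the verification of this Schatten-$\alpha/(\alpha-1)$ bound for an arbitrary CPTP map.
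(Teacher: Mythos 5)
Your skeleton (Stinespring reduction to a partial trace, then a convexity/concavity property of $Q_\alpha(\rho,\sigma)=\tr(\sigma^{(1-\alpha)/(2\alpha)}\rho\sigma^{(1-\alpha)/(2\alpha)})^\alpha$ combined with unitary twirling, with $\alpha=1,\infty$ recovered by continuity) is exactly the paper's, and that part is fine. The gap is that neither of your two convexity inputs is actually established, and in each case the missing step is precisely where the theorem's difficulty lives. For $1/2\leq\alpha<1$ you assert that joint concavity of $(\rho,\sigma)\mapsto\tr(\rho^{1/2}\sigma^{(1-\alpha)/\alpha}\rho^{1/2})^\alpha$ ``should follow by combining'' Lieb's theorem with Epstein's. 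It does not follow by any such composition: Lieb's theorem concerns $\tr K^*A^pKB^q$ with the trace outside the product, Epstein's theorem is a one-variable statement, concavity of operator maps does not pass through composition with $t\mapsto t^\alpha$ inside a trace, and joint concavity in two operator variables is strictly stronger than separate concavity. The paper's actual mechanism is a Legendre-transform decoupling: for $0<\alpha<1$,
$$
\tr\left(\sigma^{-\beta}\rho\,\sigma^{-\beta}\right)^\alpha=\inf_{H\geq0}\left(\alpha\tr H\rho-(\alpha-1)\tr\left(H^{1/2}\sigma^{(\alpha-1)/\alpha}H^{1/2}\right)^{\alpha/(\alpha-1)}\right),
$$
which makes the $\rho$-dependence linear and reduces everything to concavity of $\sigma\mapsto\tr(H^{1/2}\sigma^pH^{1/2})^{1/p}$ with $p=(\alpha-1)/\alpha\in[-1,0)$. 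That negative-exponent case is \emph{not} Epstein's theorem; it is a new lemma, proved via Ando's convexity theorem, and it is where the restriction $\alpha\geq1/2$ (i.e.\ $p\geq-1$) enters. Your sketch contains neither the decoupling device nor the Ando input.

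For $\alpha>1$ you reduce, via Schatten--H\"older duality, to a norm-contraction bound for $\omega\mapsto\sigma^{q}\mathcal E^*\bigl(\mathcal E(\sigma)^{-q}\omega\,\mathcal E(\sigma)^{-q}\bigr)\sigma^{q}$ and then explicitly flag that bound as the main obstacle. It is: that contraction estimate carries the entire content of the theorem in this range (it is essentially Beigi's route, which he proves by complex interpolation of the map between the Schatten exponents $1$ and $\infty$, not by Ando plus an Araki--Lieb--Thirring bound as you suggest), and your proposal does not prove it. The paper sidesteps it entirely: the same Legendre transform, now a supremum over $H\geq 0$, exhibits $Q_\alpha$ for $\alpha>1$ as a sup of functions that are linear in $\rho$ and convex in $\sigma$ by the Epstein/Carlen--Lieb concavity with $p=(\alpha-1)/\alpha\in(0,1)$; joint convexity then gives monotonicity by the same twirling argument. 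In short, your reduction steps are sound, but both load-bearing lemmas are missing, so the proposal does not yet constitute a proof.
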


\begin{theorem}[Joint convexity]\label{main2}
Let $1/2\leq\alpha\leq 1$. Then $D_\alpha(\rho\|\sigma)$ is jointly convex on 
pairs $(\rho,\sigma)$ of non-negative operators with $\tr\rho= t$ for any fixed $t>0$.
\end{theorem}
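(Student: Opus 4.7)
The plan is to derive Theorem~\ref{main2} from the monotonicity statement of Theorem~\ref{main} by the standard ancilla/direct-sum construction. Given pairs $(\rho_0,\sigma_0)$ and $(\rho_1,\sigma_1)$ with $\tr\rho_0 = \tr\rho_1 = t$ and $\lambda \in (0,1)$, set $\rho = \lambda\rho_0 + (1-\lambda)\rho_1$ and $\sigma = \lambda\sigma_0 + (1-\lambda)\sigma_1$, and introduce on $\mathcal H \otimes \C^2$ the block-diagonal operators
$$\tilde\rho = \lambda\rho_0 \otimes |0\rangle\langle 0| + (1-\lambda)\rho_1 \otimes |1\rangle\langle 1|, \quad \tilde\sigma = \lambda\sigma_0 \otimes |0\rangle\langle 0| + (1-\lambda)\sigma_1 \otimes |1\rangle\langle 1|.$$
The partial trace over $\C^2$ is a completely positive, trace preserving map sending $\tilde\rho \mapsto \rho$ and $\tilde\sigma \mapsto \sigma$, and $\tr\tilde\rho = t$. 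Theorem~\ref{main} therefore gives $D_\alpha(\rho\|\sigma) \leq D_\alpha(\tilde\rho\|\tilde\sigma)$.

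It then remains to show $D_\alpha(\tilde\rho\|\tilde\sigma) \leq \lambda D_\alpha(\rho_0\|\sigma_0) + (1-\lambda)D_\alpha(\rho_1\|\sigma_1)$. Set $\beta = (1-\alpha)/(2\alpha)$, which is $\geq 0$ since $\alpha \leq 1$. The orthogonality of the two blocks together with the arithmetic identity $(1+2\beta)\alpha = 1$ gives
$$\tr(\tilde\sigma^\beta \tilde\rho \tilde\sigma^\beta)^\alpha = \lambda\,\tr(\sigma_0^\beta \rho_0 \sigma_0^\beta)^\alpha + (1-\lambda)\,\tr(\sigma_1^\beta \rho_1 \sigma_1^\beta)^\alpha,$$
so $D_\alpha(\tilde\rho\|\tilde\sigma)$ equals $(\alpha-1)^{-1}$ times the logarithm of a convex combination of the quantities $t^{-1}\tr(\sigma_i^\beta \rho_i \sigma_i^\beta)^\alpha$. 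For $\alpha \in [1/2,1)$ the prefactor $(\alpha-1)^{-1}$ is \emph{negative}, so concavity of $\log$ flips the standard Jensen inequality into the bound $D_\alpha(\tilde\rho\|\tilde\sigma) \leq \lambda D_\alpha(\rho_0\|\sigma_0) + (1-\lambda)D_\alpha(\rho_1\|\sigma_1)$, as desired. The endpoint $\alpha=1$ is handled either by the continuity of $D_\alpha$ in $\alpha$ recorded in the introduction or by invoking the classical Lindblad--Lieb joint convexity of the relative von Neumann entropy.

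The main obstacle in the paper has already been isolated in Theorem~\ref{main}: once monotonicity is in hand, Theorem~\ref{main2} is essentially bookkeeping. The two fine points worth flagging are that the trace-normalization hypothesis $\tr\rho_0 = \tr\rho_1 = t$ is genuinely used -- both to make $\tr\tilde\rho = t$ and to remove the variable factor $(\tr\rho)^{-1}$ from the definition of $D_\alpha$ so that the two convex combinations on the two sides are directly comparable -- and that the restriction $\alpha \leq 1$ enters the argument only through the sign of $(\alpha-1)^{-1}$ in the last step. For $\alpha > 1$ the same chain of inequalities would point the wrong way, which is consistent with the known failure of joint convexity of $D_\alpha$ in that regime.
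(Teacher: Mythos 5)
Your proof is correct, but it takes a genuinely different route from the paper's. The paper derives Theorem~\ref{main2} in one line from Proposition~\ref{jointcon}: for $1/2\le\alpha<1$ the functional $(\rho,\sigma)\mapsto\tr\bigl(\sigma^{(1-\alpha)/(2\alpha)}\rho\sigma^{(1-\alpha)/(2\alpha)}\bigr)^\alpha$ is jointly concave, composing with the increasing concave function $\log$ preserves joint concavity, and multiplying by the negative constant $(\alpha-1)^{-1}$ (the prefactor $(\tr\rho)^{-1}=t^{-1}$ being constant on the set considered) yields joint convexity. You instead run the standard implication ``monotonicity under CPTP maps implies joint convexity'' via the block-diagonal ancilla on $\mathcal H\otimes\C^2$; this is the converse direction of the argument the paper uses to get Theorem~\ref{main} \emph{from} Proposition~\ref{jointcon}. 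Your argument is not circular, since the paper's proof of Theorem~\ref{main} relies only on Proposition~\ref{jointcon} (via Stinespring and a Haar average) and not on Theorem~\ref{main2}; and your computation is sound --- in particular the exponent identity $(1+2\beta)\alpha=1$ with $\beta=(1-\alpha)/(2\alpha)$, which makes $\tr(\tilde\sigma^\beta\tilde\rho\tilde\sigma^\beta)^\alpha$ exactly additive over the two orthogonal blocks with weights $\lambda$ and $1-\lambda$, and the observation that the negative sign of $(\alpha-1)^{-1}$ converts concavity of $\log$ into the needed inequality. What the paper's route buys is economy: it needs neither the Stinespring/Haar machinery of Theorem~\ref{main} nor the block computation. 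What your route buys is a structural point: once data processing is known, joint convexity follows from nothing more than additivity of the trace functional over orthogonal blocks, a mechanism that transfers verbatim to other entropies with the same two properties. Your handling of the endpoint $\alpha=1$ (continuity in $\alpha$, or Lindblad's classical result) matches what the paper does for that case.
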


For the relative von Neumann entropy ($\alpha=1$) both theorems are due to Lindblad \cite{Lb}, 
whose proof is based on Lieb's concavity theorem \cite{Li}. Theorem \ref{main} for $\alpha\in (1,2]$ 
is due to \cite{MuLe} and \cite{Wi}. In a preprint of \cite{MuLe} its validity was conjectured for
 all values $\alpha\geq 1/2$. Shortly after the first version of our paper appeared (arXiv:1306.5358v1) which proved this conjecture for all $\alpha\geq 1/2$, Beigi independently posted (arXiv:1306.5920) an alternative
proof  of Theorem \ref{main} in the range $\alpha\in(1,\infty)$.

Just as in Lindblad's monotonicity proof for $\alpha=1$, we will deduce Theorem \ref{main} 
for $\alpha>1$ from Lieb's concavity theorem \cite{Li}. The proof for $1/2\leq\alpha<1$ uses 
a close relative of this theorem, namely, Ando's convexity theorem \cite{An}. These theorems 
enter in the proof of Proposition \ref{jointcon} below.

\bigskip

Let us turn to the proofs of the theorems. Both of them are based on the following proposition.

\begin{proposition}
\label{jointcon}
The following map on pairs of non-negative operators
$$
(\rho,\sigma)\mapsto\tr\left( \sigma^{(1-\alpha)/(2\alpha)} \rho \sigma^{(1-\alpha)/(2\alpha)} \right)^\alpha
$$
is jointly concave for $1/2\leq\alpha<1$ and jointly convex for $\alpha>1$.
\end{proposition}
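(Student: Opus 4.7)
The plan is to represent $Q_\alpha(\rho,\sigma):=\tr(\sigma^p\rho\sigma^p)^\alpha$, with $p=(1-\alpha)/(2\alpha)$, as a supremum (if $\alpha>1$) or infimum (if $1/2\le\alpha<1$) of a family of $(\rho,\sigma)$-functions that are individually jointly convex (respectively, concave), and then invoke the standard fact that a supremum of jointly convex functions is jointly convex, and an infimum of jointly concave functions is jointly concave. The driver is the tangent-line representation of $\tr X^\alpha$ coming from Klein's trace inequality applied to $x\mapsto x^\alpha$: setting $\beta:=\alpha/(\alpha-1)$, one has, for any $X\ge 0$,
\[
\tr X^\alpha =
\begin{cases}
\sup_{Y>0}\bigl\{\alpha\tr(YX)-(\alpha-1)\tr Y^\beta\bigr\} & \text{if }\alpha>1,\\
\inf_{Y>0}\bigl\{\alpha\tr(YX)+(1-\alpha)\tr Y^\beta\bigr\} & \text{if }0<\alpha<1,
\end{cases}
\]
the extremum being attained at $Y=X^{\alpha-1}$.

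I will substitute $X=\sigma^p\rho\sigma^p$ (initially for $\sigma>0$; the case $\sigma\ge 0$ is handled by the approximation $\sigma\mapsto\sigma+\varepsilon I$ and continuity) and perform the change of variable $\omega:=\sigma^pY\sigma^p$. Cycling the trace, the first bracket term collapses to $\alpha\tr(\rho\omega)$, which is linear in $\rho$ and independent of $\sigma$. The penalty $\tr Y^\beta=\tr(\sigma^{-p}\omega\sigma^{-p})^\beta$ rewrites, via the spectrum-preserving identity $\tr(B^*B)^r=\tr(BB^*)^r$ applied with $B=\omega^{1/2}\sigma^{-p}$, as $\tr\bigl[(\omega^{1/2}\sigma^{1/\beta}\omega^{1/2})^\beta\bigr]$, where I use $1/\beta=(\alpha-1)/\alpha=-2p$. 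The relation $(1/\beta)\cdot\beta=1$ is precisely the structural feature of the choice $p=(1-\alpha)/(2\alpha)$ that makes the argument work.

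The key one-variable input is then: for fixed $\omega\ge 0$, the map $\sigma\mapsto\tr\bigl[(\omega^{1/2}\sigma^{1/\beta}\omega^{1/2})^\beta\bigr]$ is concave on $\sigma>0$ both for $\beta\ge 1$ (i.e.\ $1/\beta\in(0,1]$, corresponding to $\alpha>1$) and for $\beta\le -1$ (i.e.\ $1/\beta\in[-1,0)$, corresponding to $\alpha\in[1/2,1)$). The first case I will derive from Lieb's concavity theorem: for integer $\beta$ one expands the inner power and recognizes the expression as a value of the multilinear form of Lieb's theorem, and the general case then follows by continuity. The second case is the analogous consequence of Ando's convexity theorem. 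For $\alpha>1$, the penalty enters each bracket with coefficient $-(\alpha-1)<0$, so its $\sigma$-concavity becomes convexity; combined with the linear first summand, each bracket is jointly convex in $(\rho,\sigma)$, and taking the supremum preserves joint convexity. For $\alpha\in[1/2,1)$, the coefficient $(1-\alpha)>0$ preserves $\sigma$-concavity, each bracket is jointly concave, and the infimum preserves joint concavity.

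The hard part will be the one-variable concavity step above, which is precisely where Lieb's and Ando's theorems do the real work; the threshold $\alpha=1/2$ in the proposition is sharp because $1/\beta\ge -1$ is equivalent to $\alpha\ge 1/2$, matching exactly the endpoint of Ando's valid range.
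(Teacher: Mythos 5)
Your overall architecture is exactly the paper's: the tangent--line (Young's trace inequality) representation of $\tr X^\alpha$, specialized to $X=\sigma^p\rho\sigma^p$ with the change of variable $\omega=\sigma^pY\sigma^p$, is precisely the paper's Lemma~\ref{repr} (and your derivation via $\tr XY\le\frac1p\tr X^p+\frac1q\tr Y^q$ is the referee's alternative proof of it), and the resulting reduction --- linearity in $\rho$, plus concavity of $\sigma\mapsto\tr\bigl(\omega^{1/2}\sigma^{1/\beta}\omega^{1/2}\bigr)^\beta$ for $\beta\ge1$ and $\beta\le-1$, plus sup/inf bookkeeping --- is the paper's Lemma~\ref{conc} and its Proposition~\ref{jointcon}. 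The algebra, the signs, and the identification of $\alpha=1/2\leftrightarrow\beta=-1$ as the Ando endpoint are all correct.

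The gap is in how you propose to prove the one-variable concavity, which you rightly call the hard part. For $\beta\ge1$ you suggest expanding $\tr\bigl[(\omega^{1/2}\sigma^{1/\beta}\omega^{1/2})^\beta\bigr]$ for integer $\beta$, recognizing ``the multilinear form of Lieb's theorem,'' and then passing to general $\beta$ ``by continuity.'' Neither step survives scrutiny. Lieb's concavity theorem concerns the bilinear expression $\tr K^*A^pKB^q$, $p+q\le1$; the would-be multilinear extension to $\tr[A^{p_1}K_1A^{p_2}K_2\cdots]$ with $\sum p_i\le 1$ is not a theorem (already for three factors such naive generalizations are known to fail), so for $\beta=3$ your expansion $\tr[\sigma^{1/3}\omega\,\sigma^{1/3}\omega\,\sigma^{1/3}\omega]$ has nothing to be ``recognized as.'' Moreover, even if concavity held at every integer $\beta$, continuity of the function in $\beta$ would not transfer concavity in $\sigma$ to non-integer $\beta$: concavity is not a closed condition under pointwise limits \emph{in the parameter}, only under limits of the functions themselves, and the non-integer values are not such limits. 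The correct input here is Epstein's theorem ($A\mapsto\tr(B^*A^qB)^{1/q}$ is concave for $0<q\le1$), or the Carlen--Lieb proof of it, which runs another Legendre transform in an auxiliary variable $X$ and reduces to Lieb's \emph{two}-variable concavity of $(A,X)\mapsto\tr A^{q/2}BX^{1-q}B^*A^{q/2}$. Similarly, for $\beta\le-1$ the statement is not literally ``a consequence of Ando's convexity theorem'': Ando gives joint convexity of $(A,X)\mapsto\tr K^*A^{-s}KX^{1+s}$, and one still needs the same auxiliary variational step (this time an infimum of jointly convex functions) to deduce concavity of $A\mapsto\tr(B^*A^qB)^{1/q}$ for $-1\le q<0$; this extra step is the actual content of the paper's proof of Lemma~\ref{conc} and is what your sketch omits. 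With those two repairs --- cite Epstein/Carlen--Lieb for $\beta\ge1$ and supply the second Legendre transform for $\beta\le-1$ --- your argument coincides with the paper's.
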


We note that this proposition implies that $\exp( (\alpha-1) D_\alpha(\rho\|\sigma))$ is jointly concave for $1/2\leq\alpha<1$ and jointly convex for $\alpha>1$ on pairs $(\rho,\sigma)$ of non-negative operators with $\tr\rho= t$ for any fixed $t>0$. Since $x\mapsto x^{1/(\alpha-1)}$ is increasing and convex for $1<\alpha\leq 2$, we deduce that $\exp( D_\alpha(\rho\|\sigma))$ is jointly convex for $1<\alpha\leq 2$ on pairs $(\rho,\sigma)$ of non-negative operators with $\tr\rho= t$ for any fixed $t>0$. This fact is also proved in \cite{MuLe} and \cite{Wi}.

The argument to derive Theorem \ref{main} from Proposition \ref{jointcon} is well known, but we include it for the sake of completeness. The fact that joint convexity implies monotonicity appears in \cite{Lb}, but here we also use ideas from \cite{Uh}.

\begin{proof}
[Proof of Theorem \ref{main} given Proposition \ref{jointcon}]
We prove the assertion for $\alpha\in[1/2,1)\cup(1,\infty)$. The remaining two cases follow by continuity in $\alpha$. By a limiting argument we may assume that the underlying Hilbert space is $\C^N$ for some finite $N$. If $\mathcal E$ is a completely positive, trace preserving map then by the Stinespring representation theorem \cite{St} there is an integer $N'\leq N^2$, a density matrix $\tau$ on $\C^{N'}$ (which can be chosen to be pure) and a unitary $U$ on $\C^N\otimes\C^{N'}$ such that
$$
\mathcal E(\gamma) = \tr_2 U \left(\gamma\otimes \tau \right) U^* \,.
$$
Thus, if $du$ denotes normalized Haar measure on all unitaries on $\C^{N'}$, then
\begin{equation}
\label{eq:uhlmann}
\mathcal E(\gamma) \otimes (N')^{-1} 1_{\C^{N'}} = \int (1\otimes u) U \left(\gamma\otimes \tau \right) U^* (1\otimes u^*) \,du \,.
\end{equation}
By the tensor property of $D_{\alpha}(\cdot\|\cdot)$,
\begin{equation} \label{two}
D_\alpha(\mathcal E(\rho)\|\mathcal E(\sigma))
= D_\alpha(\mathcal E(\rho)\otimes (N')^{-1} 1_{\C^{N'}} \|\, \mathcal E(\sigma)\otimes (N')^{-1} 1_{\C^{N'}}) \,.
\end{equation}
By \eqref{eq:uhlmann} and Proposition \ref{jointcon} the double, normalized $u$ integral in (\ref{two})
is bounded from 
below (if $1/2\leq\alpha<1$) or above (if $\alpha>1$) by a single integral:
\begin{align*}
& \int D_\alpha( (1\otimes u) U \left(\rho\otimes \tau \right) U^* (1\otimes u^*) \|\, (1\otimes u) U \left(\sigma\otimes \tau \right) U^* (1\otimes u^*) ) \,du \\
& \quad = \int D_\alpha( \rho\otimes \tau \| \sigma\otimes \tau ) \,du \\
& \quad = D_\alpha( \rho\otimes \tau \| \sigma\otimes \tau ) \\
& \quad = D_\alpha( \rho \| \sigma ) \,.
\end{align*}
Here, we used the unitary invariance of $D_{\alpha}(\cdot\|\cdot)$, the normalization of the Haar measure and the tensor property of $D_{\alpha}(\cdot\|\cdot)$.

Dividing the inequality we have obtained by $\tr\mathcal E(\rho)=\tr\rho$, taking logarithms and multiplying by $\alpha-1$ we obtain the monotonicity stated in the theorem.
\end{proof}

\begin{proof}
[Proof of Theorem \ref{main2} given Proposition \ref{jointcon}]
This follows immediately from Proposition \ref{jointcon} together with the fact that $x\mapsto\log x$ is increasing and concave.
\end{proof}

Thus, we have reduced the proofs of Theorems \ref{main} and \ref{main2} to the proof of Proposition \ref{jointcon}. The latter, in turn, is based on two ingredients. The first one is a representation formula for $\tr\left( \sigma^{(1-\alpha)/(2\alpha)} \rho \sigma^{(1-\alpha)/(2\alpha)} \right)^\alpha$.

\begin{lemma}\label{repr}
Let $\rho,\sigma\geq 0$ be operators. Then, if $\alpha>1$,
$$
\tr\left( \sigma^{(1-\alpha)/(2\alpha)} \rho \sigma^{(1-\alpha)/(2\alpha)} \right)^\alpha
= \sup_{H\geq 0} \left( \alpha \tr H\rho - (\alpha-1) \tr\left( H^{1/2} \sigma^{(\alpha-1)/\alpha} H^{1/2} \right)^{\alpha/(\alpha-1)} \right) \,.
$$
The same equality holds for $0<\alpha<1$, provided $\sup$ is replaced by $\inf$.
\end{lemma}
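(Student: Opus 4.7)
The plan is to prove the lemma by recognizing the right-hand side as an operator version of the Legendre/Young duality for $x\mapsto x^\alpha$, then making an algebraic change of variables to reach the stated form. The scalar input is elementary: for any $x\geq 0$, the function $y\mapsto \alpha xy-(\alpha-1)y^{\alpha/(\alpha-1)}$ on $(0,\infty)$ has a unique critical point at $y=x^{\alpha-1}$, where it takes the value $x^\alpha$, and its second derivative has sign $-\sgn(\alpha-1)$, so this point is a global maximum for $\alpha>1$ and a global minimum for $0<\alpha<1$.

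The next step is to lift this to a trace identity: for every $A\geq 0$,
$$
\tr A^\alpha=\sup_{B\geq 0}\bigl(\alpha\tr AB-(\alpha-1)\tr B^{\alpha/(\alpha-1)}\bigr)\qquad(\alpha>1),
$$
with $\sup$ replaced by $\inf$ for $0<\alpha<1$. Attainment follows from the scalar computation by choosing $B=A^{\alpha-1}$, defined via functional calculus on $\ran A$: both traces then collapse to $\tr A^\alpha$ and the right-hand side equals $\tr A^\alpha$. The inequality in the opposite direction is Young's inequality for Schatten norms: for $\alpha>1$, Hölder's inequality gives $\tr AB\leq (\tr A^\alpha)^{1/\alpha}(\tr B^{\alpha/(\alpha-1)})^{(\alpha-1)/\alpha}$, and applying the scalar Young inequality to these two numbers rearranges to $\alpha\tr AB-(\alpha-1)\tr B^{\alpha/(\alpha-1)}\leq\tr A^\alpha$; for $0<\alpha<1$ the analogue follows from the reverse Schatten--Hölder and reverse Young inequalities.

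Finally, I substitute $A=\sigma^{(1-\alpha)/(2\alpha)}\rho\sigma^{(1-\alpha)/(2\alpha)}$ and $B=\sigma^{(\alpha-1)/(2\alpha)}H\sigma^{(\alpha-1)/(2\alpha)}$. When $\sigma>0$ this is a bijection between $B\geq 0$ and $H\geq 0$, and cyclicity of the trace immediately gives $\tr AB=\tr\rho H$. For the remaining term, write $B=CC^*$ with $C=\sigma^{(\alpha-1)/(2\alpha)}H^{1/2}$; since $CC^*$ and $C^*C$ have the same nonzero spectrum with multiplicity, $\tr(CC^*)^s=\tr(C^*C)^s$ for every real $s$, and this converts $\tr B^{\alpha/(\alpha-1)}$ into $\tr(H^{1/2}\sigma^{(\alpha-1)/\alpha}H^{1/2})^{\alpha/(\alpha-1)}$, as required. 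The case of non-invertible $\sigma$ is recovered by replacing $\sigma$ with $\sigma+\epsilon 1$ and sending $\epsilon\downarrow 0$; for $\alpha>1$ one must additionally check that when $\ker\sigma\not\subseteq\ker\rho$ (so the left-hand side is $+\infty$ by convention) the right-hand supremum is likewise $+\infty$, which follows by taking $H=tP$ with $P$ a rank-one projector onto a unit vector in $\ker\sigma\setminus\ker\rho$ and letting $t\to\infty$.

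The main delicacy I anticipate is the $0<\alpha<1$ half of the trace identity, since there the exponent $\alpha/(\alpha-1)$ is negative and the reverse Schatten--Hölder inequality in general requires $B$ to be strictly positive; a short continuity argument is then needed to pass from $\inf_{B>0}$ to $\inf_{B\geq 0}$. Modulo this point, the two regimes follow the same algebraic skeleton, differing only in the direction of the inequality.
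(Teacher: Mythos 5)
Your proof is correct and follows essentially the route the paper itself records (as the referee's alternative argument): the trace Young/H\"older duality $\tr A^\alpha=\sup_{B\geq 0}(\alpha\tr AB-(\alpha-1)\tr B^{\alpha/(\alpha-1)})$ combined with the substitution $A=\sigma^{(1-\alpha)/(2\alpha)}\rho\sigma^{(1-\alpha)/(2\alpha)}$, $B=\sigma^{(\alpha-1)/(2\alpha)}H\sigma^{(\alpha-1)/(2\alpha)}$ and the equality of the nonzero spectra of $CC^*$ and $C^*C$. Your handling of $0<\alpha<1$ via the reverse H\"older/Young inequalities (with the $B>0$ caveat) and of non-invertible $\sigma$ is sound and in fact somewhat more explicit than the paper's ``the proof for $0<\alpha<1$ is similar.''
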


The second ingredient in the proof of Proposition \ref{jointcon} is a concavity result for $\tr \left( B^* A^p B\right)^{1/p}$.

\begin{lemma}\label{conc}
For a fixed operator $B$, the map on positive operators
$$
A\mapsto \tr \left( B^* A^p B\right)^{1/p}
$$
is concave for $-1\leq p\leq 1$, $p\neq 0$.
\end{lemma}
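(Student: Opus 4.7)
The cases $p = \pm 1$ are trivial: $A \mapsto \tr B^*AB$ is linear, and for invertible $B$ the identity $(B^*A^{-1}B)^{-1} = B^{-1}AB^{-*}$ yields $\tr(B^*A^{-1}B)^{-1} = \tr((BB^*)^{-1}A)$ (extending to general $B$ by continuity). The case $p \in (-1, 0)$ reduces to $p \in (0, 1)$: for invertible $B$, from $(B^*A^pB)^{-1} = B^{-1}A^{-p}B^{-*}$ one obtains, with $\tilde B := B^{-*}$,
$$
\tr(B^*A^pB)^{1/p} = \tr(\tilde B^*A^{-p}\tilde B)^{1/(-p)},
$$
and since $-p \in (0,1)$, the asserted concavity in $A$ follows from the positive-exponent case.

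The main case is $p \in (0,1)$, which is essentially Epstein's theorem. Setting $M := BB^* \geq 0$ and using cyclicity, $\tr(B^*A^pB)^{1/p} = \tr(A^{p/2}MA^{p/2})^{1/p}$. For the model case $p = 1/2$, this becomes $\tr(A^{1/2}MA^{1/2}M)$, and Lieb's concavity theorem (applied to $(A_1, A_2) \mapsto \tr(A_1^{1/2}MA_2^{1/2}M)$ with $1/2 + 1/2 = 1$) yields concavity in $A$ on the diagonal $A_1 = A_2 = A$. For $p = 1/n$ with $n \in \N$, $n \geq 2$, one would hope to extend this via a multivariate version of Lieb's concavity applied to $n$ copies of $A^{1/n}$ whose exponents sum to $1$, and then pass from rational $p = 1/n$ to general $p \in (0,1)$ via the integral representation
$$
t^p = \frac{\sin \pi p}{\pi}\int_0^\infty \frac{t\, s^{p-1}}{t + s}\, ds, \qquad p \in (0, 1),
$$
which writes $A^p$ as an integral of operator-monotone functions of $A$.

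The main obstacle is that neither extension step is entirely straightforward: multivariate versions of Lieb's concavity for $n \geq 3$ require delicate care, and the $1/p$-th power outside the trace interacts awkwardly with an integral representation of $A^p$ inside. A cleaner alternative---and the route I expect the paper to follow---is to establish a variational representation analogous to Lemma \ref{repr}, exhibiting $\tr(B^*A^pB)^{1/p}$ as an infimum over an auxiliary positive operator $H$ of a functional of the form $c_1(H) \tr(K(H)A^p) + c_2(H)$ with $K(H) \geq 0$. Each such term is concave in $A$ by operator concavity of $A \mapsto A^p$ for $p \in (0,1]$, and the infimum inherits concavity. The technical challenge is that the natural Young-inequality derivation yields only a \emph{supremum} of such concave-in-$A$ terms, which does not preserve concavity; so a subtler duality argument is needed to convert the sup into an inf.
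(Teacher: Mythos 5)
Your proposal is correct, but for the only case the paper actually proves here, namely $-1\le p<0$, you take a genuinely different route. The paper, following \cite{CaLi}, writes $p\tr\left(B^*A^pB\right)^{1/p}$ as an infimum over $X\ge 0$ of $\tr A^{p/2}BX^{1-p}B^*A^{p/2}-(1-p)\tr X$ and invokes Ando's convexity theorem \cite{An} to get joint convexity of the quantity being minimized, whence convexity of the infimum and, after dividing by $p<0$, concavity. Your inversion identity
$$
\tr\left(B^*A^pB\right)^{1/p}=\tr\left(\tilde B^*A^{-p}\tilde B\right)^{1/(-p)},\qquad \tilde B=B^{-*},
$$
valid for invertible $B$ and $A>0$, instead reduces the negative range outright to Epstein's theorem with exponent $-p\in(0,1]$; this is shorter and bypasses Ando's theorem entirely, at the price of resting on the positive case rather than being independent of it (both you and the paper take $0<p\le 1$ from \cite{Ep,CaLi}, so this is a legitimate trade, and your reduction even recovers the paper's closing Remark on $\tr\left(B^*A^pB\right)^{q/p}$). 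Two caveats. First, for $p<0$ the continuity extension to singular $B$ is degenerate: the natural limit of $\tr\left(B_\epsilon^*A^pB_\epsilon\right)^{1/p}$ is identically $+\infty$; this is harmless for Proposition \ref{jointcon}, where such $H$ simply drop out of the infimum in Lemma \ref{repr}, but it deserves a sentence. Second, your exploratory remarks on $p\in(0,1)$ contain two misconceptions which you should not leave standing even though that case is only cited: a ``multivariate version of Lieb's concavity'' with $n\ge 3$ factors is false in general (which is precisely why Epstein's proof proceeds via complex analysis), and the Carlen--Lieb resolution is not to ``convert the sup into an inf'' but to note that for $0<p<1$ one has $p\tr\left(B^*A^pB\right)^{1/p}=\sup_{X\ge0}\left(\tr A^{p/2}BX^{1-p}B^*A^{p/2}-(1-p)\tr X\right)$, where the quantity being maximized is \emph{jointly} concave in $(A,X)$ by Lieb's theorem \cite{Li}; a partial supremum of a jointly concave function is concave, even though a supremum of a family of separately concave functions is not.
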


The case $0< p\leq 1$ in this lemma is due to Epstein \cite{Ep}, with an alternative proof 
due to Carlen--Lieb \cite{CaLi} based on the Lieb concavity theorem  \cite{Li}.
Legendre transforms, similar to Lemma \ref{repr}, are also
used in \cite{CaLi}.

The remaining case $-1\leq p<0$ can be proved similarly, using Ando's convexity
theorem \cite{An}, as 
in \cite{CaLi}.  (For 
an introduction to both theorems we refer to \cite{Ca}.)
While this case could easily have been included in  \cite{CaLi}, it was not, and
for the benefit of the reader we explain the argument below.   Alternatively,
 one could probably follow Bekjan's adaption \cite{Bj} of Epstein's proof to establish 
the  $-1\leq p<0$ case.
\begin{proof}
[Proof of Proposition \ref{jointcon} given Lemmas \ref{repr} and \ref{conc}]
Lemma \ref{conc} implies that
$$
\sigma\mapsto (1-\alpha) \tr\left( H^{1/2} \sigma^{(\alpha-1)/\alpha} H^{1/2} 
\right)^{\alpha/(\alpha-1)}
$$
is concave for $1/2\leq\alpha<1$ and convex for $\alpha>1$. The claim of the proposition 
now follows from the representation formula in Lemma \ref{repr}.
\end{proof}

It remains to prove the lemmas.

\begin{proof}
[Proof of Lemma \ref{repr}]
Let $\alpha>1$ and abbreviate $\beta =(\alpha-1)/(2\alpha)$. Since $H^{1/2} \sigma^{2\beta} H^{1/2}$ and $\sigma^{\beta} H \sigma^{\beta}$ have the same non-zero eigenvalues, the right side of the lemma is the same as
$$
\sup_{H\geq 0} \left( \alpha \tr H\rho - (\alpha-1) \tr\left( \sigma^{\beta} H \sigma^{\beta} \right)^{1/(2\beta)} \right) \,.
$$
Let us show that the supremum is given by $\tr\left( \sigma^{-\beta} \rho \sigma^{-\beta} \right)^\alpha$. To prove this, we may assume (by continuity) that $\sigma$ is positive and we observe that the supremum is attained (at least if the underlying Hilbert space is finite-dimensional, which we may assume again by an approximation argument). The Euler--Lagrange equation for the optimal $\hat H$ reads
$$
\alpha \rho - \alpha \sigma^{\beta} \left( \sigma^{\beta} \hat H \sigma^{\beta} \right)^{1/(\alpha-1)} \sigma^{\beta} = 0\,,
$$
that is,
$$
\hat H = \sigma^{-\beta} \left( \sigma^{-\beta} \rho \sigma^{-\beta} \right)^{\alpha-1} \sigma^{-\beta} \,. 
$$
By inserting  this into the expression we wish to maximize, we obtain $\tr\left( \sigma^{-\beta} \rho \sigma^{-\beta} \right)^\alpha$, as claimed. The proof for $0<\alpha<1$ is similar.
\end{proof}

We are grateful to the referee for suggesting the following alternative proof of Lemma~\ref{repr} for $\alpha>1$. Recall that for positive operators $X$ and $Y$ and $1<p,q<\infty$ with $1/p+1/q=1$ one has
$$
\tr XY \leq \frac{1}{p} \tr X^p + \frac{1}{q} \tr Y^q \,,
$$
with equality if $X^p=Y^q$. This implies the statement of the lemma, if we set $X= \sigma^{-\beta}\rho \sigma^{-\beta}$, $Y=\sigma^\beta H \sigma^\beta$ and $p=\alpha$, $q=\alpha/(\alpha-1)$.

\begin{proof}
[Proof of Lemma \ref{conc}]
As we have already mentioned, the result for $0<p\leq 1$ is known \cite{Ep,CaLi}. Therefore, we only give the proof for $-1\leq p<0$ and for this we adapt the argument of \cite{CaLi}. We note that
$$
p \tr\left( B^* A^p B\right)^{1/p} = \inf_{X\geq 0} \left( \tr A^{p/2} B X^{1-p} B^* A^{p/2} - (1-p) \tr X \right) \,.
$$
(The proof is similar to that of Lemma \ref{repr}.) If we can prove that
\begin{equation}
\label{eq:conv}
(A,X)\mapsto \tr A^{p/2} B X^{1-p} B^* A^{p/2}
\end{equation}
is jointly convex on pairs of non-negative operators, then $p \tr\left( B^* A^{p} B\right)^{1/p}$ 
as an infimum over \emph{jointly} convex functions is convex, (see \cite[Lemma 2.3]{CaLi}) which implies the lemma.

To prove that \eqref{eq:conv} is jointly convex, we write, as in \cite{Li},
$$
\tr A^{p/2} B X^{1-p} B^* A^{p/2} = \tr Z^{p} K^* Z^{1-p} K \,,
$$
where
$$
K = \begin{pmatrix}
0 & 0 \\ B^* & 0
\end{pmatrix} \,,
\qquad
Z = \begin{pmatrix}
A & 0 \\ 0 & X
\end{pmatrix} \,. 
$$
We can consider $K$, which is an operator in $\mathcal H \oplus \mathcal H$, as a vector in $\left( \mathcal H \oplus \mathcal H \right) \otimes \left( \mathcal H \oplus \mathcal H \right)$ and write $\tilde K$. Thus,
$$
\tr Z^{p} K^* Z^{1-p} K = \langle \tilde K, Z^{p}\otimes Z^{1-p} \tilde K \rangle \,.
$$
By Ando's convexity theorem \cite{An}, the right side is a convex function of $Z$. This is equivalent to \eqref{eq:conv} being jointly convex, as we set out to prove.
\end{proof}

\begin{remark}
More generally, for a fixed operator $B$, $A\mapsto\tr\left( B^* A^p B\right)^{q/p}$ is concave on non-negative operators for $0< |p|\leq q\leq 1$. The case $p> 0$ is due to Carlen--Lieb \cite{CaLi} and the case $p<0$ follows from similar arguments. More precisely, we can write
$$
r \tr\left( B^* A^{p} B\right)^{q/p} = \inf_{X\geq 0} \left( \tr A^{p/2} B X^{1-r} B^* A^{p/2} - (1-r) \tr X \right)
$$
with the notation $r=p/q<0$. Since
$$
\tr A^{p/2} B X^{1-r} B^* A^{p/2} = \tr Z^p K^* Z^{1-r} K
$$
with $Z$ and $K$ as in the previous proof, the more general assertion again follows from Ando's convexity theorem \cite{An}.
\end{remark}

\subsection*{Acknowledgements}
We thank E. Carlen, V. Jaksic, C.-A. Pillet and A. Vershynina for valuable comments on a first draft of this paper. We are grateful to the referee for various suggestions that helped to improve this paper.


\bibliographystyle{amsalpha}

\end{document}